\definecolor{tabgray}{RGB}{200,200,200}
\long\def\symbolfootnote[#1]#2{\begingroup%
\def\thefootnote{\fnsymbol{footnote}}\footnote[#1]{#2}\endgroup}
\begin{document}
\title{Polar Codes with Memory}
\author{
Wenyue~Zhou,~\IEEEmembership{Student~Member,~IEEE,} Qiang~Liu,~\IEEEmembership{Student~Member,~IEEE,}
Yifei~Shen,~\IEEEmembership{Student~Member,~IEEE,} Xiaofeng~Zhou,~\IEEEmembership{Student~Member,~IEEE,}
Chuan~Zhang,~\IEEEmembership{Member,~IEEE,} Yaohua~Xu,~\IEEEmembership{Member,~IEEE} and~Liping~Li,~\IEEEmembership{Member,~IEEE.}
\vspace{-7mm}
\thanks{This work was supported in part by the National Natural Science Foundation
of China through grant 61501002, in part by the Natural Science Project of
Ministry of Education of Anhui through grant KJ2015A102, and
in part by the Talents Recruitment Program of Anhui University.}
\thanks{Wenyue Zhou, Yaohua Xu and Liping Li are with the Key Laboratory of Intelligent Computing and Signal Processing,
Ministry of Education,
Anhui University, Hefei, China (liping\_li@ahu.edu.cn).}
\thanks{
Qiang Liu, Yifei Shen, Xiaofeng Zhou and Chuan Zhang are with the National Mobile Communications Research Laboratory, Southeast University,
Nanjing, China (chzhang@seu.edu.cn)}
}

\maketitle

\begin{abstract}
Polar codes with memory (PCM) are proposed in this paper: a pair of consecutive code blocks containing a controlled number of mutual information bits. The shared mutual information bits of the succeeded block can help the failed block to recover. The underlying polar codes can employ any decoding scheme such as the successive cancellation (SC) decoding (PCM-SC), the belief propagation (BP) decoding (PCM-BP), and the successive cancellation list (SCL) decoding (PCM-SCL). The analysis shows that the packet error rate (PER) of PCM decreases to the order of PER squared while maintaining the same complexity as the underlying polar codes. Simulation results indicate that for PCM-SC, the PER is comparable to (less than 0.3 dB) the stand-alone SCL decoding with two lists  for the block length $N=256$. The PER of PCM-SCL with $L$ lists can match that of the stand-alone SCL decoding with $2L$ lists. Two hardware decoders for PCM are also implemented: the in-serial (IS) decoder and the low-latency interleaved (LLI) decoder. For $N=256$, synthesis results show that in the worst case, the latency of the PCM LLI decoder is only $16.1\%$ of the adaptive SCL decoder with $L=2$, while the throughput is improved by 13 times compared to it.

\end{abstract}

\begin{IEEEkeywords}
polar codes, successive cancellation decoding, mutual information bits, interleaved decoder, polar codes with memory.
\end{IEEEkeywords}

\section{Introduction}\label{sec_ref}
Polar codes invented by Ar\i kan \cite{arikan_iti09} have been proven to be a coding scheme that can achieve the capacity of symmetric binary-input discrete memoryless channels (B-DMCs) with low complexity of encoding and successive cancellation (SC) decoding.
Nevertheless, on account of the insufficient polarization,
the error-correcting performance of moderate length polar codes under SC decoding is unsatisfactory \cite{urbanke_isit09, vardy_it15}.
To acquire better finite-length performance, successive cancellation list (SCL) decoding was proposed in \cite{niu_itc11, niu_itc13, vardy_it15} and
it is comparable to low-density parity-check (LDPC) codes in terms of error-correcting performance.
Belief propagation (BP) was an alternative decoding algorithm \cite{arikan_icl08, urbanke_isit09, eslami_allerton10} over the factor graph of polar codes. It has better performance than the SC decoding and supports parallel decoding.
But the bit error rate (BER) performance of polar codes with BP decoding is still inferior to the SCL decoding (shown in this paper).

In this paper, a new construction scheme of polar codes is proposed by sharing a controlled number of
information bits between two consecutive encoding blocks. The input stream is divided into an odd stream and an even
stream. In the encoding process, the corresponding odd and even blocks share a fraction of information bits, which
are called \textit{mutual information bits} in this paper. Cyclic redundancy check (CRC) bits are attached to the information bits in each block.
The encoding of these two blocks can be done sequentially or in parallel.
In the decoding process, only when one of the pair is decoded correctly, the succeeded block can
provide the estimations of the mutual information bits to the failed block.
With a proper design of the positioning of the mutual information bits, the failed block can be
recovered with another round of decoding.

Since the two consecutive blocks share mutual information bits, it is like there
is some memory in the encoding process. Therefore, we call
the proposed scheme polar codes with memory (PCM) to differentiate this scheme from the traditional polar encoding scheme.
In addition, this scheme can be directly extended to $m$ ($m>2$) blocks. Based on this, a general PCM is proposed in this paper, which reduces the effective code rate loss while maintaining the same order of PER, compared with the direct extension of PCM.
Analysis shows that the packet error rate (PER) of PCM is only square of that of the underlying polar codes.
Note that this great performance improvement comes with a complexity the same as the underlying polar codes.
The decoding of PCM can be implemented by the SC, BP, or SCL decoding. In other words, the decoding complexity of PCM is the complexity of the underlying SC, BP, or SCL decoding.
For ease of description, PCM-SC-$2$ is used to refer to the PCM employing the SC decoding and two blocks sharing mutual information bits. Similarly, PCM-BP-$2$ and PCM-SCL-$2$ refer to two blocks sharing mutual information bits, each block employing the BP and SCL decoding, respectively. In addition, PCM-SC-$m$ ($m>3$) refers to the general PCM employing the SC decoding with $m$ blocks.

The simulation results show that the PER of PCM-SC-2
is only 0.3 dB away from the stand-alone SCL decoding with $L=2$ ($L$ being the list size) for the studied case in
the paper. The performance of PCM-BP-2 can achieve the same performance of the stand-alone SCL decoding with $L=2$. In addition, the performance of PCM-SCL-2 with $L$ lists matches the performance of the stand-alone SCL decoding with $2L$ lists. Two hardware architectures are also proposed in the paper: an in-serial (IS) architecture and a low-latency interleaved (LLI) architecture. Implementation results show that for the block length 256, the proposed LLI architecture for PCM with the SC decoding has lower latency and higher throughput compared to the adaptive SCL decoder ($L=4$) \cite{Sural01}.

The rest of the paper is organized as follows. Section \ref{sec_background} is on the basics of polar codes.
Section \ref{sec_polar_memory} introduces the proposed PCM scheme. Specifically,
Section \ref{sec_encoding_memory} introduces the encoding process of PCM, Section
\ref{sec_decoding_process} is about the corresponding decoding process, and in Section \ref{sec_mutual_bits},
the optimal strategy to position mutual information bits is proposed. The error performance
of PCM is analyzed in Section \ref{sec_error_analysis}. The application of a BP or SCL decoder
to PCM is introduced in Section \ref{sec_bp_decoding}. We also compare PCM
with Turbo codes in Section \ref{sec_comparision_turbo}.
Moreover, we extend the PCM to $m > 2$ blocks in Section \ref{sec_gen_polar}.
In Section \ref{sec_numerical}, the simulation results are provided to validate the proposed PCM.
In Section \ref{sec_hardware}, the hardware architectures of PCM decoding are implemented.
The concluding remarks are provided at the end.
\section{Preliminaries of Polar Codes}\label{sec_background}
Denote $v_1^N$ as an $N$-length vector $(v_1,...,v_N)$.
Let $W:~\mathcal{X}$ $\rightarrow$ $\mathcal{Y}$ denote a symmetric B-DMC, with the input alphabet $\mathcal{X}=\{0,1\}$, the
output alphabet $\mathcal{Y}$, and the channel transition probability $W(y|x)$, $x\in\mathcal{X}$, $y\in\mathcal{Y}$.
Let $N=2^n$ ($n \ge 1$) denote the block code length.
The  generator matrix of polar codes is $G$, which  is given by $G=B_NF^{\otimes n}$.
Here $B_N$ denotes the bit-reversal permutation matrix, $F=\bigl[\begin{smallmatrix} 1 & 0 \\ 1 & 1 \end{smallmatrix}\bigr]$, and $F^{\otimes n}$ represents the $n$-th Kronecker power of $F$ over the binary field $\mathbb{F}_2$.
The codewords $x_1^N$ can be obtained by $x_1^N=u_1^NG$, where $u_1^N$ is the source vector,
consisting of $K$ information bits and $N-K$ frozen bits (the fixed information in the source vector).
The codeword $x_1^N$ is transmitted over $N$ independent copies of $W$, written as $W^N$, with a transition probability $W^N(y_1^N|x_1^N)$.

Channel polarization process has two parts: channel combining and channel splitting. Channel combining is a phase that combines copies of $W$ in a recursive manner to produce a vector channel $W_N$, with $W_N(y_1^N|u_1^N) = W^N(y_1^N|x_1^N)$. Channel splitting is an operation splitting $W_N$ back into a set of $N$ binary-input channels $W_N^{(i)}$, $i \in \{1,2,...,N\}$. The
$i$-th such channel is called bit channel $i$ (meaning the channel that bit $i$ virtually experiences).
According to \cite{arikan_iti09}, $I(W_N^{(i)})$ (the capacity of bit channel $i$) converges to either 0 or 1 as $N$ tends to infinity, and the fraction of the bit channels with capacity 1 approaches $I(W)$.

With finite block lengths, not all bit channels are fully polarized. The principle of
polar codes is to choose the $K$ most reliable bit
channels among $N$ bit channels to convey information bits. The other bits are called frozen bits which are fixed to
be transmitted on the rest channels. The good information set is denoted as $\mathcal{A}$ and complementary set is
${\mathcal{A}_c}$. Denote $u_{\mathcal{A}}$ as a subvector of the vector $u_1^N$ that takes elements of it from
the set $\mathcal{A}$.

The SC decoder is proposed in \cite{arikan_iti09} and it recursively computes the likelihood ratio (LR)
of bit $i$ from
\begin{gather}
L_N^{(i)}(y_1^N,\hat{u}_1^{i-1})\triangleq \frac{W_N^{(i)}(y_1^N,\hat{u}_1^{i-1}|0)}{W_N^{(i)}(y_1^N,\hat{u}_1^{i-1}|1)},
\end{gather}
where $\hat{u}_1^{i-1}$ is the estimation of bits $u_1^{i-1}$. The SC decoder
generates the estimate $\hat{u}_i$ of bit $u_i$ ($i \in \mathcal{A}$) from
\begin{gather}
\hat{u}_i = \left\{ \begin{array}{ll}
0, & \textrm{if $L_N^{(i)}(y_1^N,\hat{u}_1^{(i-1)})\geq 1$}\\
1, & \textrm{otherwise}.\\
\end{array} \right.
\end{gather}
The decoding complexity of SC is $\mathcal{O}(N\log N)$ \cite{arikan_iti09}.

\section{Polar Codes with Memory} \label{sec_polar_memory}
In this section, the encoding of PCM  and the decoding strategies are introduced, which improves the performance of polar codes under the SC, BP or SCL decoding.

\subsection{Encoding with Memory}\label{sec_encoding_memory}
The top-level scheme is shown in Fig.~\ref{fig_system_model}.
Let $K_{\mathrm{crc}}$ denote the number of CRC bits in each block and these CRC bits are part of the $K$ information bits.
Then there are $K_{\mathrm{info}} = K - K_{\mathrm{crc}}$ pure information bits in each block.
Let $K_{\mathrm{p}}$ be the number of the mutual information bits , and the number of the rest information bits is denoted as $K_{\mathrm{i}} = K_{\mathrm{info}} - K_{\mathrm{p}}$.

\begin{figure}
{\par\centering
\resizebox*{3.0in}{!}{\includegraphics{./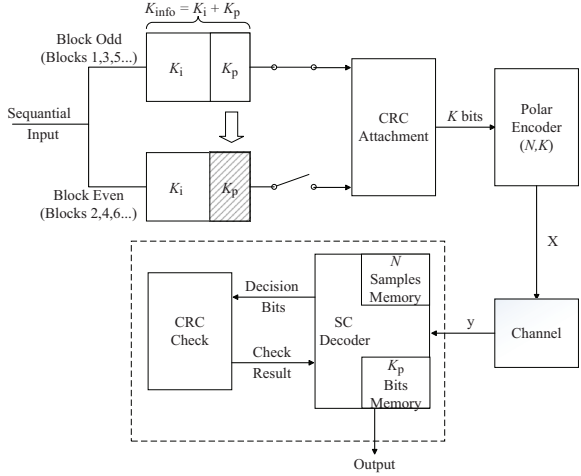}} \par}
\caption{System model of PCM employing the SC decoding. The encoding diagram is shown here
in a sequential manner. A parallel encoding can be done
with two CRC attachment modules and two polar encoders. }
\label{fig_system_model}
\end{figure}

In the encoding process, a frame of sequential input bits is first divided into chunks of the length $2K_{\mathrm{i}}+K_{\mathrm{p}}$.
Then each chunk is divided into two blocks: Block Odd (with $K_{\mathrm{i}} + K_{\mathrm{p}}$ bits) and Block Even (with $K_{\mathrm{i}}$ bits).
Block Even then takes the $K_{\mathrm{p}}$ bits from Block Odd to form an input vector with the length $K_{\mathrm{info}}$ for its CRC generation.
In this way, there are clearly $K_{\mathrm{p}}$ mutual information bits which are both included in Block Odd and Block Even. These mutual information bits are placed at the same indices, and the mutual information set is denoted as $\mathcal{B}$.
The input bit stream arrangement is shown in Fig.~\ref{fig_encoding_detail}.

\begin{figure}
{\par\centering
\resizebox*{3.0in}{!}{\includegraphics{./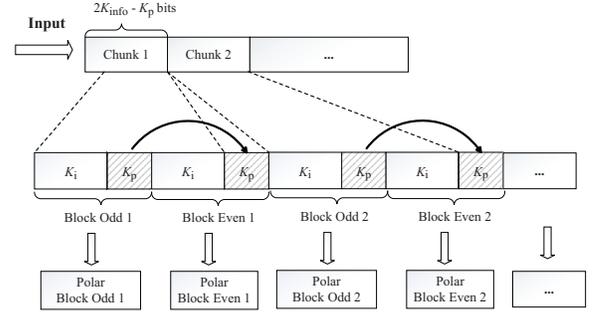}} \par}
\caption{The input bit arrangement of PCM.}
\label{fig_encoding_detail}
\end{figure}

The encoding of the two blocks can be done sequentially or in parallel as seen from Fig.~\ref{fig_system_model}, where
both the CRC attachment and the polar encoding are performed to Block Odd and Block Even alternatively, under the control of a switch.

\subsection{The Decoding Process} \label{sec_decoding_process}
The symbols of encoded code blocks are transmitted over the symmetric B-DMC channel $W$, and the noisy version of them are observed at the receiver side.
The receiver collects chucks of samples with a length of $2N$: the first $N$ samples for Block Odd and the rest for Block Even.
The SC decoder generates an estimate $\hat{u}_1^N$ for each block. The CRC check module returns a check result for each block. The possible check results are:
\begin{itemize}
  \item Case$\,$$1$: Both Block Odd and Block Even are decoded correctly;
  \item Case$\:$$2$: Block Odd is decoded correctly but Block Even is decoded incorrectly;
  \item Case$\:$$3$: Block Odd is decoded incorrectly while Block Even is decoded correctly;
  \item Case$\:$$4$: Both Blocks are decoded incorrectly.
\end{itemize}\par
For Case 1 and Case 2, since Block Odd is decoded correctly, the $K_{\mathrm{p}}$ estimations of the mutual information bits are stored in the memory for possible re-use by the second round of decoding of Block Even. For Case 1, since Block Even is also decoded correctly, there is no need for any more actions. For Case 2, Block Even is decoded incorrectly, a new round of SC decoding for Block Even can be carried out.
For Case 3 and Case 4, since Block Odd is decoded incorrectly, the initial $N$ LR values of this block need to be saved for a possible new round of decoding. For Case 3, the correctly decoded Block Even can provide the estimations of the mutual information bits to Block Odd, invoking a new round of SC decoding of Block Odd. For Case 4, since Block Even is also decoded incorrectly, there is nothing the decoder can do for both blocks.

A more detailed description of the decoding process when a new round of SC decoding occurs is as follows.
The $K_{\mathrm{p}}$ estimations of the mutual information bits from the correctly decoded block are fed to the incorrectly decoded block. Take Case 2 as an example. Here the decoder of Block Even can repeat the SC decoding up to the first bit in $\mathcal{B}$. When it reaches to the first bit with the index $i \in {\mathcal{B}}$, then the decoder takes this bit as a frozen bit: no matter what the calculated LR value is for $u_i$, it is assigned to the decision taken from Block Odd. The SC decoding process goes on until the end, treating all bits in $\mathcal{B}$ as frozen bits. The re-decoding of Block Odd in Case 3 is the same as that of Case 2.

\subsection{Positioning of Mutual Information Bits}\label{sec_mutual_bits}
Every two consecutive transmitting blocks share $K_{\mathrm{p}}$ mutual information bits. The positioning of mutual information bits is to find an optimal way in assigning these mutual information bits to the input of the two blocks (Block Odd and Block Even). Here ``optimal" means the best system error performance. The exact formulation is derived as follows. The size of set $\mathcal{B}$ is $|\mathcal{B}| = K_{\mathrm{p}}$ and the subvector $u_{\mathcal{B}}$ contains the mutual information bits. Theoretically, there are $\binom{K}{K_{\mathrm{p}}}$ ways to choose the set $\mathcal{B}$.
Assume the information set $\mathcal{A} = \{i_1, i_2, ...,i_K\}$ is ordered in the ascending order with respect to the bit channel reliability. In other words, there exists the relationship of $P_e(W_N^{(i_1)}) \ge P_e(W_N^{(i_2)}) \ge ... \ge P_e(W_N^{(i_K)})$, where $P_e(W_{N}^{(i)})$ is the error probability of the $i$-th information bit. The following proposition states an optimal way to achieve the best union bound.
\newtheorem{proposition}{Proposition}
\begin{proposition}\label{pro_position}
Supposing the information set $\mathcal{A} = \{i_1, i_2, ...,i_K\}$ is ordered in the ascending order with respect to the bit channel reliability, then the set $\mathcal{B}$ containing the first $K_{\mathrm{p}}$ elements of the set $\mathcal{A}$ as the mutual information bits indices can produce the minimum union bound.
\end{proposition}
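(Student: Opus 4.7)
The plan is to write a union bound on the PCM packet error rate, pull out the single term that depends on the mutual set $\mathcal{B}$, and minimize that term by an elementary rearrangement on the sorted per-bit error probabilities.

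First I would invoke the standard first-error-event union bound for SC decoding \cite{arikan_iti09}, giving the stand-alone block failure probability at most $\sum_{i\in\mathcal{A}} P_e(W_N^{(i)})$. For the re-decoding step of Section \ref{sec_decoding_process}, the bits in $\mathcal{B}$ are supplied correctly by the partner block (Cases~2 and~3), so the first-error event at any $i\in\mathcal{B}$ is precluded and the re-decoding failure probability is union bounded by $\sum_{i\in\mathcal{A}\setminus\mathcal{B}} P_e(W_N^{(i)})$. The surviving per-bit error probabilities are still $P_e(W_N^{(i)})$ because they were already defined under the genie assumption on prior bits, which the re-decoding preserves by pinning $\hat{u}_i=u_i$ for every $i\in\mathcal{B}$. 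Combining the four cases with the independence of the two blocks' channel noise shows that the $\mathcal{B}$-independent Case~4 produces a quadratic term, while Cases~2 and~3 contribute a term proportional to $\sum_{i\in\mathcal{A}\setminus\mathcal{B}} P_e(W_N^{(i)})$, which is the only place $\mathcal{B}$ enters the bound.

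Minimizing the bound therefore reduces to choosing a $K_{\mathrm{p}}$-subset $\mathcal{B}\subseteq\mathcal{A}$ to minimize $\sum_{i\in\mathcal{A}\setminus\mathcal{B}} P_e(W_N^{(i)})$, equivalently to maximize $\sum_{i\in\mathcal{B}} P_e(W_N^{(i)})$. With the given ordering $P_e(W_N^{(i_1)})\ge\cdots\ge P_e(W_N^{(i_K)})$, a routine exchange argument settles the matter: whenever $\mathcal{B}$ contains some $i_k$ with $k>K_{\mathrm{p}}$ while missing some $i_j$ with $j\le K_{\mathrm{p}}$, swapping them weakly increases the objective because $P_e(W_N^{(i_j)})\ge P_e(W_N^{(i_k)})$. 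Iterating drives $\mathcal{B}$ to $\{i_1,\ldots,i_{K_{\mathrm{p}}}\}$, which is therefore optimal.

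The hard part is neither the exchange step nor the independence of the two blocks, but rigorously verifying that freezing the mutual bits to the partner's correct estimates really does remove exactly the indices of $\mathcal{B}$ from the first-error-event sum while leaving the remaining per-bit probabilities equal to $P_e(W_N^{(i)})$. Concretely one must confirm that the re-decoding schedule keeps the genie condition $\hat{u}_1^{i-1}=u_1^{i-1}$ in place for every subsequent $i\in\mathcal{A}\setminus\mathcal{B}$, so that no cross-coupling between the two blocks leaks into the bound and the dependence on $\mathcal{B}$ cleanly decouples. Once that is in hand, the proposition collapses to a textbook rearrangement inequality on a sorted finite list of nonnegative reals.
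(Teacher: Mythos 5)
Your proposal is correct and follows essentially the same route as the paper: both reduce the problem to minimizing the re-decoding union bound $\sum_{i\in\mathcal{A}\setminus\mathcal{B}}P_e(W_N^{(i)})$, equivalently maximizing $\sum_{i\in\mathcal{B}}P_e(W_N^{(i)})$, and both conclude by selecting the $K_{\mathrm{p}}$ least reliable bit channels (your exchange argument is just a spelled-out version of the paper's direct comparison between $\mathcal{B}$ and an arbitrary $\mathcal{B}'$). The only difference is that you devote extra care to justifying that freezing the mutual bits removes exactly the indices of $\mathcal{B}$ from the first-error-event sum, a step the paper simply asserts as the equivalence of re-decoding with the information set $\mathcal{A}\setminus\mathcal{B}$.
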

\begin{proof}
Define the PER over the information set $\mathcal{A}$ as $P_B(\mathcal{A})$. Then its union bound \cite{vardy_iti21} is
\begin{gather}
P_B(\mathcal{A})\leq\sum_{i\in{\mathcal{A}}}P_e(W_{N}^{(i)}).
\end{gather}
With a pair of consecutive code blocks, when re-decoding is performed for either of them, it is equivalent to the case that the information set
of the other block is $\mathcal{A}'=\mathcal{A} \setminus \mathcal{B}$. This is because one block is decoded correctly and the
mutual information bits are now considered as frozen bits for another block. In such circumstance, the union bound
for the incorrectly decoded block is:
\begin{gather}
P_B(\mathcal{A}') \leq \sum_{i\in{\mathcal{A}'}}P_e(W_{N}^{(i)}).
\end{gather}
Supposing set $\mathcal{B}'$ is any other mutual information set, so the equivalent information set of the incorrect block can be similarly
derived as $\mathcal{A}''=\mathcal{A} \setminus \mathcal{B}'$. So we can get
\begin{gather}
P_B(\mathcal{A}'') \leq \sum_{i\in{\mathcal{A}''}}P_e(W_{N}^{(i)}).
\end{gather}
Because set $\mathcal{B}$ contains the indices corresponding to the $K_{\mathrm{p}}$ largest error probabilities in $\mathcal{A}$, it is obvious that
\begin{gather}
\sum_{i\in\mathcal{B}}P_e(W_N^{(i)}) \ge \sum_{i\in\mathcal{B}'}P_e(W_N^{(i)}).
\end{gather}
Therefore,
\begin{gather}
\sum_{i\in\mathcal{A}'}P_e(W_N^{(i)}) \leq \sum_{i\in\mathcal{A}''}P_e(W_N^{(i)}).
\end{gather}
It means that the union bound of $P_B(\mathcal{A}')$ is smaller than $P_B(\mathcal{A}'')$. Since $\mathcal{B}'$ is arbitrary, we can conclude that $P_B(\mathcal{A}')$ has the smallest union bound.
\end{proof}

\subsection{Error Performance Analysis}\label{sec_error_analysis}
In this section, the error performance of PCM is analyzed. Here we omit the inside argument of $P_B(\mathcal{A})$ for compactness. Instead, the symbol $P_B$ is used to represent the underlying PER of polar codes with the information set $\mathcal{A}$.
The PER of PCM consists of two parts:
\begin{itemize}
\item Part$\,$1: Block Odd  and Block Even are both decoded incorrectly, corresponding to Case 4 in Section \ref{sec_decoding_process}.
\item Part$\;$2: The re-decoding of Block Even (Case 2) or Block Odd (Case 3) fails.
\end{itemize}
For Part 1, the error probability is $P_B^2$. For Part 2, supposing the PER of the re-decoding is $P'_B$, the error probability is therefore $P_B(1-P_B)P'_B$. The PER of PCM is therefore:
\begin{equation}\label{eq_new_per}
P_{new} = P_B^2 + P_B(1-P_B)P'_B.
\end{equation}
With the optimal placement of the mutual information bits in Section \ref{sec_mutual_bits}, there must be some blocks which can be recovered with the help of additional $K_{\mathrm{p}}$ frozen bits. Representing $P'_B$ by $\alpha P_B$, where $\alpha$ can be obtained empirically for now, Eq.~(\ref{eq_new_per}) can be rewritten as:
\begin{equation}\label{eq_new_bound}
P_{new} = P_B^2 + P_B(1-P_B) \alpha P_B = (1+\alpha)P^2_B-\alpha P^3_B.
\end{equation}
By Eq.~(\ref{eq_new_bound}), it is shown that with the same complexity of the SC decoding, PCM can
achieve a PER which is on the order of the underlying PER squared.

\subsection{Decoding with a BP or SCL Decoder}\label{sec_bp_decoding}
In the proposed PCM, the SC decoding can be perfectly replaced by the BP or SCL decoding. For Case 2 and Case 3, only one block is decoded correctly.
The correctly decoded block can provide correct decisions of the mutual information bits to be
used by the incorrectly decoded block. Here note that for the BP decoding, the best way to use these correct decisions is to treat the
mutual information bits as frozen bits, instead of using the soft values of them. The reason is simple: by
treating them as frozen bits, the initial LR values of these bits are equivalently set to be infinity, which is
definitely better than using finite soft LR values from the correctly decoded block. As for the SCL decoding, the mutual information bits are treated as frozen bits directly. Therefore, even with the BP or SCL decoding for PCM, the mutual information bits are used in the same way as the PCM employing the SC decoding.

\subsection{Comparison with Turbo Codes}\label{sec_comparision_turbo}
The encoding of PCM shares a certain amount of information bits between a pair of consecutive blocks. This can be compared with Turbo codes with two parallel  identical encoders.
Compared to Turbo codes, there are several differences.

First, all the incoming information bits go through two identical encoders for Turbo codes.
While PCM only shares a fraction of information bits between two encoding blocks, enabling a flexible code rate configuration.
The second difference is that PCM does not constantly
exchange soft information between two blocks in the decoding process.
Instead, only when one block fails and the other succeeds, estimations
of the mutual information bits are fed from the succeeded block to the failed block.
The information pass can be considered as a sporadic procedure: the average percent of
all additional rounds of decoding is only (denoted as $P_a$)
\begin{equation}\label{eq_additional_sc}
P_{a} = P_B(1-P_B) = P_B-P_B^2 < P_B.
\end{equation}
Compared with stand-alone polar codes, the additional decoding can result in a significant reduction in PER as shown in Eq.~(\ref{eq_new_bound}), and it accounts for only a small percentage of the overall decoding operations.

\section{General Polar Codes with Memory}\label{sec_gen_polar}
In Section \ref{sec_polar_memory}, PCM is proposed where two consecutive blocks share a controlled number of information bits. A natural question arises: can we extend this scheme to $m > 2$  polar blocks and possibly achieve a better error performance?
The direct extension of the encoding scheme from two blocks to $m$ blocks ($m>2$) is first analyzed in this section. Then an improved encoding scheme is proposed which achieves the same order of the PER while improves the overall code rate of the direct extension. This improved version is called general PCM in this paper.
\subsection{Direct Extension of Polar Codes with Memory}\label{sec_per_analysis}
The PER of PCM in Section \ref{sec_error_analysis} is $P_B^2 + P_B(1-P_B)P'_B$, where each chunk contains two blocks.
When this scheme is extended to $m$ blocks with each containing $K_{\mathrm{p}}$ mutual information bits, the PER consists of the following parts:
\begin{itemize}
\item Part$\,$1: Only one block is decoded incorrectly, and the new round of the decoding fails again;
\item Part$\;$2: Two blocks are decoded incorrectly, and at least one block in the new round of decoding fails again;
\item ...
\item Part$\;$$m$: All of the $m$ blocks are decoded incorrectly.
\end{itemize}
For Part 1, because the re-decoding of the failed block fails again, there is one block error among $m$ polar blocks. The PER in this case is therefore:
\begin{equation}
P_1 = \frac{1}{m}C_m^1P_B(1-P_B)^{m-1}P'_B.
\end{equation}
For Part 2, with two blocks failed, the final block error among $m$ blocks consists of two case: 1) one of the re-decoded blocks
fails and 2) both of the re-decoded blocks are decoded incorrectly. Therefore, the PER is:
\begin{equation}
P_2 = \frac{2}{m}C_m^2P_B^2(1-P_B)^{m-2}(\frac{1}{2}C_2^1P'_B(1-P'_B)+P_B^{'2}).
\end{equation}
Generally, for Part $k$, $(1 \leq k < m)$, the error probability $P_k$ is:
\begin{equation}
\begin{split}
P_k &= \frac{k}{m}C_m^kP_B^k(1-P_B)^{m-k}(\frac{1}{k}C_k^1P'_B(1-P'_B)^{k-1}\\
&+\frac{2}{k}C_k^2P_B^{'2}(1-P'_B)^{k-2}+...+P_B^{'k}).
\end{split}
\end{equation}

For Part $m$, because all of the blocks are decoded incorrectly, the error probability $P_m$ is simply $P_m = P_B^m$.
Accumulating the error probability of each part and simplifying the formula, the PER of the direct extension
of PCM is obtained:

\begin{equation}\label{eq_n_blocks}
\begin{split}
P_{new} &= \sum_{k=1}^mP_k = P_B(1-P_B)^{m-1}P'_B\\
&+\frac{2}{m}C_m^2P_B^2(1-P_B)^{m-2}P'_B+...+ P_B^m.
\end{split}
\end{equation}

Replacing $P'_B$ by $\alpha P_B$ in Eq.~(\ref{eq_n_blocks}), we can obtain a new PER:

\begin{equation}\label{eq_n_blocks_bound}
\begin{split}
P_{new}&= \alpha P_B^2(1-P_B)^{m-1}\\
&+\frac{2}{m}C_m^2\alpha P_B^3(1-P_B)^{m-2}+...+ P_B^m.
\end{split}
\end{equation}

With a relatively small $P_B$, the new PER is dominated by
$\alpha P_B^2(1-P_B)^{m-1}$, which corresponds to the situation when only one block is decoded incorrectly.
All the other parts have terms on the order of at least $P_B^3$.
Based on this fact, a general encoding scheme in next section is proposed to deal with the case where one block is decoded incorrectly among $m$ blocks. For all the other cases, no re-decoding is performed.
This enables the scheme to still maintain the same PER order while improves the overall code rate.

\subsection{The General Polar Codes with Memory}\label{sec_encoding_n_blocks}
In this section, a general encoding scheme of PCM is proposed.
From the discussions in the previous section, it can be seen that the direct extension of the encoding scheme does not increase the minimum order of the PER.
The PER performance is limited by the error event that there is only one failed block among $m$ blocks. All the other error events have lower PER level. If the encoding scheme is designed to only recover the limiting error event while ignoring those error events with lower PER level, then the overall effective code rate can be improved.

For the direct extension of PCM, the effective overall code rate is
\begin{equation}\label{eq_n_mutual}
\begin{split}
R_m &= \frac{m(K-K_{\mathrm{crc}})-(m-1)K_{\mathrm{p}}}{mN} \\
&= R -\frac{K_{\mathrm{crc}}}{N}-\frac{m-1}{m}\frac{K_{\mathrm{p}}}{N},
\end{split}
\end{equation}
with a rate loss of $\frac{K_{\mathrm{crc}}}{N}+\frac{m-1}{m}\frac{K_{\mathrm{p}}}{N}$, where $R$ denotes the code rate of the underlying polar codes.
To reduce the rate loss of the direct extension of PCM, a general encoding scheme is proposed.
Fig.~\ref{fig_n_block_encoding_detail} shows such an input bit arrangement of the general PCM, where each chunk contains $m$ blocks.
In Fig.~\ref{fig_n_block_encoding_detail}, the first $m-1$ blocks have their own information bits, no mutual information bits are shared among them. However, for each of these $m-1$ blocks, $K_{\mathrm{p}}$ information bits are taken out and added together (modulo two addition).
The resultant $K_{\mathrm{p}}$ bits are put as the mutual information bits for the last block. So the input bit arrangement of the general PCM can be shown as follows:
\begin{equation}\label{eq_n_mutual}
u_{\mathcal{B}}^m = u_{\mathcal{B}}^1 \oplus u_{\mathcal{B}}^2 \oplus ... \oplus u_{\mathcal{B}}^{m-1},
\end{equation}
where $u_{\mathcal{B}}^k$, $k\in(1,2,...,m)$ denotes the $K_{\mathrm{p}}$ mutual information bits of block $k$.
The positioning of the mutual information bits for all $m$ blocks follows Proposition~\ref{pro_position}: they are put as those most poorly protected information bits in each block.

In this way, the effective code rate of the general PCM is:
\begin{equation}\label{eq_n_coderate}
R_m = \frac{m(K-K_{\mathrm{crc}})-K_{\mathrm{p}}}{mN} = R-\frac{K_{\mathrm{crc}}}{N}-\frac{K_{\mathrm{p}}}{mN}.
\end{equation}
With a large $m$, the fractional rate loss $K_{\mathrm{p}}/mN$ is negligible with a constant $K_{\mathrm{p}}$.
However, a large $m$ comes with a higher decoding complexity. Trade-off can always be made between a small rate loss and a lower decoding latency.

The design of the general encoding scheme can recover the $K_{\mathrm{p}}$ mutual information bits of the failed block if all other $m-1$ blocks in the chunk are decoded successfully: $u_{\mathcal{B}}^k = \sum_{i=1, i\neq k}^{m}u_{\mathcal{B}}^{i}$. This scheme can not correct more than one block error among $m$ blocks. When there is only one incorrectly decoded block, the correct $K_{\mathrm{p}}$ mutual information bits can be recovered and a new round of decoding can be performed.

For the general PCM, the new round of decoding occurs only when one block is decoded incorrectly, the PER of our proposed scheme is:
\begin{equation}\label{eq_n_blocks_proposed}
\begin{split}
P_{new} &= P_B(1-P_B)^{m-1}P'_B +\frac{2}{m}C_m^2P_B^2(1-P_B)^{m-2}\\
&+...+\frac{k}{m}C_m^kP_B^k(1-P_B)^{m-k}+... + P_B^m,
\end{split}
\end{equation}
which can be rewritten by replacing $P'_B$ by $\alpha P_B$:
\begin{equation}\label{eq_n_blocks_bound_proposed}
\begin{split}
P_{new}&= \alpha P_B^2(1-P_B)^{m-1}+\frac{2}{m}C_m^2P_B^2(1-P_B)^{m-2}\\
&+...+\frac{k}{m}C_m^kP_B^k(1-P_B)^{m-k}+... + P_B^m.
\end{split}
\end{equation}
Comparing Eq.~(\ref{eq_n_blocks_bound}) and Eq.~(\ref{eq_n_blocks_bound_proposed}), the general PCM scheme has negligible performance loss compared with the direct extension scheme.

\begin{figure}
{\par\centering
\resizebox*{3.0in}{!}{\includegraphics{./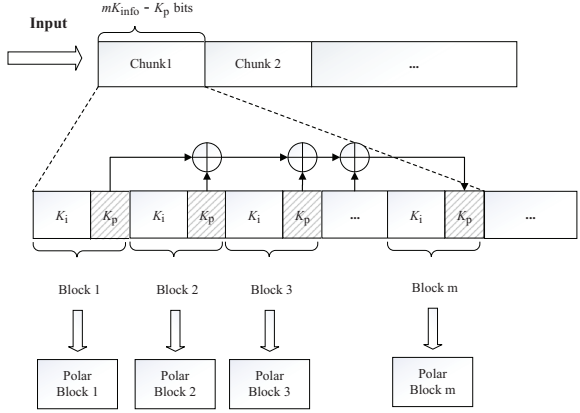}} \par}
\caption{The input bit arrangement of the general PCM.}
\label{fig_n_block_encoding_detail}
\end{figure}

\section{Simulation Results} \label{sec_numerical}
In this section, we provide simulation results to show the performance of PCM. The channel is the
additive white Gaussian noise (AWGN) channel. The block length of the polar codes is $N=256$, and the number of underlying information bits is $K=140$, including a 12-bit CRC with a generator polynomial $g(x)=x^{12}+x^{11}+x^{10}+x^9+x^8+x^4+x+1$.
The code rate of the underlying polar codes is therefore $R =\frac{K}{N} = 0.5469$. The number of mutual information bits shared between two consecutive blocks is set as $K_{\mathrm{p}}=24$.

Fig.~\ref{fig_ber_r_0p5} reports the BER performance of the PCM with two consecutive blocks sharing mutual information bits.
The effective code rate of the PCM-SC-2 is $R_2 = R-\frac{2(K-K_{\mathrm{crc}})-K_{\mathrm{p}}}{2N}=R-\frac{K_{\mathrm{crc}}}{N}-\frac{K_{\mathrm{p}}}{2N}=0.4531$.
For a fair comparison, the code rate of the stand-alone polar codes with SC, BP, and SCL decoding is adjusted as $R_2$, and the stand-alone polar codes with SC and BP decoding also contain a 12-bit CRC.
For the stand-alone SCL decoding, the list size $L$ is simulated for both $L=2$ and $L=4$.
It is observed that the PCM-SC-$2$ outperforms the traditional SC and BP decoding by about 0.41 dB and 0.22 dB at BER=$10^{-4}$, respectively.
In addition, PCM-SC-2 achieves a comparable performance (less than 0.3 dB) as the SCL decoding with $L=2$ at the same BER level.
On the other hand, the PCM-BP-$2$ achieves the same performance as the SCL decoding with $L=2$ when $E_b/N_0 \ge 4$  dB.
Fig.~\ref{fig_per_r_0p5} shows the corresponding PER performance, and the trend is consistent with that shown in Fig.~\ref{fig_ber_r_0p5}.

Fig.~\ref{fig_bound} shows the simulated PER of the PCM-SC-2 and the PER analyzed in Eq.~(\ref{eq_new_bound}). Here the maximum (6.9) and the minimum (0.38) values of $\alpha$ are found from the simulations, producing the $P_{new}^{upper}$ and $P_{new}^{lower}$ in Fig.~\ref{fig_bound}. It is observed that the PER performance of the PCM-SC-2 follows the lower bound for small $E_b/N_0$ values (less than 3 dB), and it follows the upper bound for large $E_b/N_0$ values (larger than 3 dB), which indicates that the PER performance of the PCM-SC-2 is on the level of PER squared of the underlying polar codes.

Fig.~\ref{fig_per_PCM_SCL} reports the PER performance of the PCM employing the SCL decoding. It is shown that the PCM-SCL-2 with $L=2$ achieves the same performance as the stand-alone SCL decoding with $L=4$ when $E_b/N_0 > 3.5$ dB. And the PCM-SCL-2 with $L=4$ and $L=8$ outperform the stand-alone SCL decoding with $L=8$ and $L=16$ by about 0.1 dB and 0.15 dB at PER=$10^{-4}$, respectively.

\begin{figure}
{\par\centering
\resizebox*{3.0in}{!}{\includegraphics{./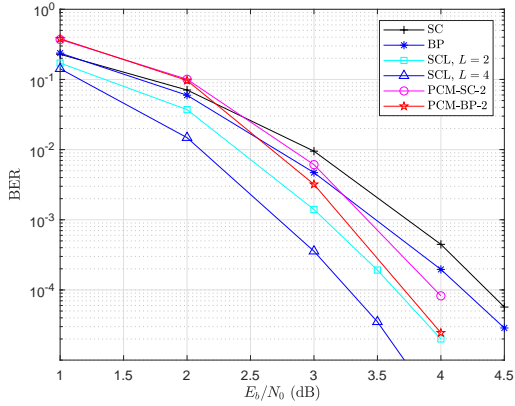}} \par}
\caption{BER of the PCM-SC-2 and PCM-BP-2 over AWGN channels: the block length is $N=256$ and each block contains $K=140$ information bits, including 12-bit CRC and $K_{\mathrm{p}}=24$ mutual information bits. The code rate of the underlying polar codes is $R=0.5469$.
The stand-alone polar codes with SC, BP, and SCL decoding all have adjusted code rate of $R_2=0.4531$.}
\label{fig_ber_r_0p5}
\end{figure}

\begin{figure}
{\par\centering
\resizebox*{3.0in}{!}{\includegraphics{./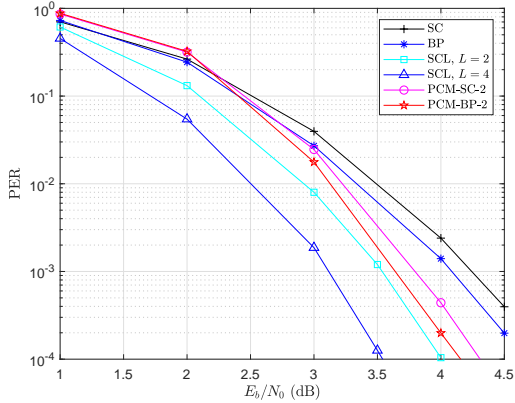}} \par}
\caption{ PER of the PCM-SC-2 and PCM-BP-2 over AWGN channels. The parameters are the same as those in Fig.~\ref{fig_ber_r_0p5}.}
\label{fig_per_r_0p5}
\end{figure}

\begin{figure}
{\par\centering
\resizebox*{3.0in}{!}{\includegraphics{./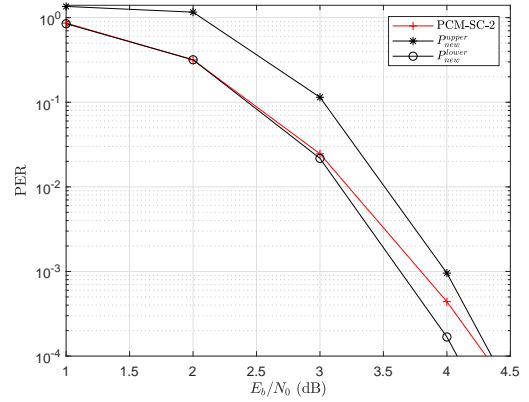}} \par}
\caption{The simulated PER of the PCM-SC-2 and the analyzed PER in Eq.~(\ref{eq_new_bound}). The parameters are the same as those in Fig.~\ref{fig_ber_r_0p5}. The $\alpha$ is set as $\alpha=6.9$ for the $P_{new}^{upper}$ and $\alpha=0.38$ for the $P_{new}^{lower}$.}
\label{fig_bound}
\end{figure}

\begin{figure}
{\par\centering
\resizebox*{3.0in}{!}{\includegraphics{./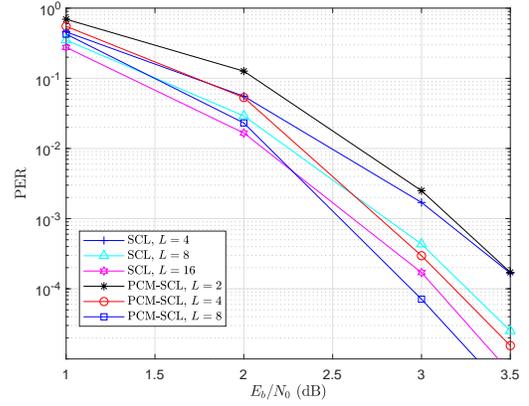}} \par}
\caption{ PER of the PCM-SCL-2 over AWGN channels. The parameters are the same as those in Fig.~\ref{fig_ber_r_0p5}.}
\label{fig_per_PCM_SCL}
\end{figure}

The good performance of PCM comes at an additional round of decoding, as
shown by Eq.~(\ref{eq_additional_sc}). Fig.~\ref{fig_additional_decoding_rate} shows the ratio of the additional
decoding to the overall decoding for the same system in Figs.~\ref{fig_ber_r_0p5} and \ref{fig_per_r_0p5}.
The curve labeled as $P_B$ shown in this figure is the PER of the underlying polar codes.
The success rate of the additional decoding is also provided in this figure, shown by the line with asterisks.
It can be seen that for $E_b/N_0 \ge 3$ dB, the additional decoding rate and the additional success rate are matched.
What is important is that the additional decoding efforts are
controlled by the PER of the underlying polar codes: the $P_B$ curve also matches closely with the other two curves for large $E_b/N_0$.
This can be seen from Eq.~(\ref{eq_additional_sc}): when $P_B$ is small ($1-P_B$ approaches 1), the additional decoding rate
$P_a$ is determined by $P_B$. The decoding failure rate of PCM is therefore left with an order of $P_B^2$.

Fig.~\ref{fig_3_blocks_results} presents the PER curves of the general PCM with $m=3$, and the parameters are the same as those in Fig.~\ref{fig_ber_r_0p5}.
By applying the proposed general encoding scheme, the PER of PCM-SC-3 can be represented as follows:
\begin{equation}\label{eq_3_blocks_bound}
P_{new} = (2+\alpha)P_B^2-(1+2\alpha)P_B^3+\alpha P_B^4.
\end{equation}
According to Eq.~(\ref{eq_n_coderate}), the code rate of PCM-SC-$3$ is $R_3=0.4688$, so the stand-alone polar codes with SC, BP, and SCL decoding all have the same adjusted code rate as $R_3$. It can be seen that PCM-SC-3 is about 0.18 dB worse compared with the stand-alone SCL with $L=2$ at PER=$10^{-3}$ level.

\begin{figure}
{\par\centering
\resizebox*{3.0in}{!}{\includegraphics{./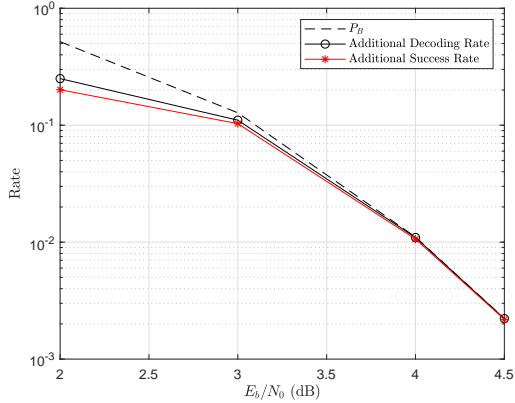}} \par}
\caption{The additional decoding rate of the PCM-SC-2 over AWGN channels.
The parameters are the same as those in Fig.~\ref{fig_ber_r_0p5}. The curve with legend $P_B$ is the PER
of the underlying polar codes.}
\label{fig_additional_decoding_rate}
\end{figure}
\begin{figure}
{\par\centering
\resizebox*{3.0in}{!}{\includegraphics{./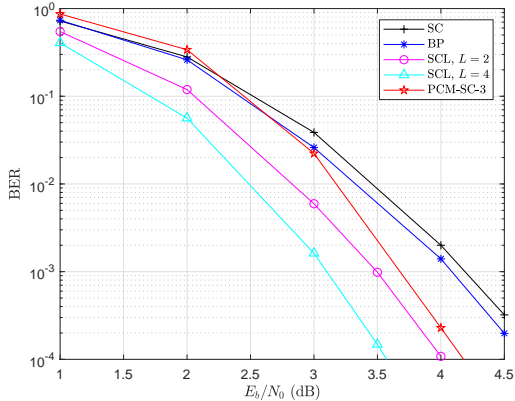}} \par}
\caption{PER of the PCM-SC-3 over AWGN channels. The parameters are the same as those in Fig.~\ref{fig_ber_r_0p5}. The effective code rate of the PCM-SC-3 is $R_3=0.4688$, and the stand-alone polar codes with SC, BP, and SCL decoding all have the same code rate $R_3$.}
\label{fig_3_blocks_results}
\end{figure}

\section{Hardware Architecture}\label{sec_hardware}
In this section, two hardware architectures for PCM-SC-2 are proposed---the IS architecture and the LLI architecture. The IS architecture is based on the SC decoder proposed in \cite{zhangchuan01}, where the processing elements (PEs) are designed with pre-computation. The proposed architecture is capable of performing both SC decoding and PCM-SC-2 decoding. The LLI architecture is inspired by the 2-interleaved SC polar decoder \cite{zhangchuan02}, and it can reduce decoding latency remarkably with only a small increase in hardware consumption compared with the IS architecture.

\subsection{In-serial PCM-SC-2 Decoder}
\begin{figure}
{\par\centering
\resizebox*{3.0in}{!}{\includegraphics{./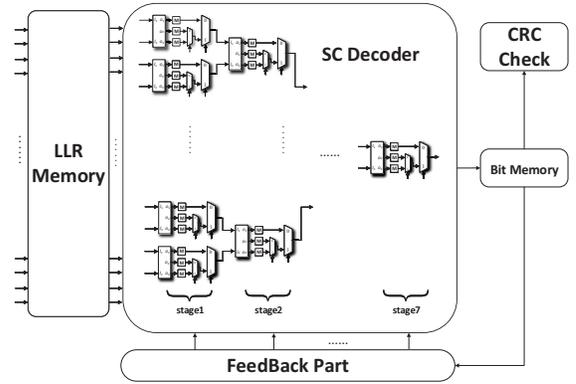}} \par}
\caption{IS PCM-SC-2 decoder.}
\label{simple_SCM}
\end{figure}

In order to increase hardware utilization and reduce computational complexity, the decoder processes the data in the form of log-likelihood ratio (LLR) instead of LR. The top-level architecture of the proposed IS PCM-SC-2 decoder is shown in Fig.~\ref{simple_SCM}. It mainly consists of five modules: the LLR memory module, the SC decoder module, the CRC check module, the feedback module, and the bit memory module. Compared with the conventional SC decoder \cite{zhangchuan01}, the LLR memory module and the bit memory module are additional.

The LLR memory is used to store LLRs which are needed for Case 2 and Case 3. The bit memory module is an important module in the architecture. In the conventional SC decoder, the location and the content of frozen bits are set in advance. When the bit memory receives a frozen bit, it neglects it, and this frozen bit is sent to the feedback module directly. In the PCM-SC-2 decoder, when Block Odd and Block Even are decoded in the first round, they are decoded in the same way as in a conventional SC decoder. When Block Odd or Block Even passes the CRC check, the bit memory immediately stores the mutual information bits of this block. When it comes to Case 2 and Case 3, the bit memory will read mutual information bits and treat them as frozen bits in the second round decoding of the failed block. In this way, the $K_{\mathrm{p}}$ mutual information bits estimates from the correctly decoded block are effectively fed to the bit memory of incorrectly decoded block.

\subsection{Low-latency Interleaved PCM-SC-2 Decoder}
It should be noticed that when it comes to Case 2 and Case 3, the decoding latency of the IS PCM-SC-2 decoder is $1.5$ times of the conventional SC decoder. When Block Odd or Block Even performs a new round of decoding, the computation of LLRs is redundant before the first erroneous mutual information bit. Based on this, an LLI architecture employing interleaved decoding is proposed and shown in Fig.~\ref{highSpeed_SCM}, which is introduced as follows.

\begin{figure}
{\par\centering
\resizebox*{3.0in}{!}{\includegraphics{./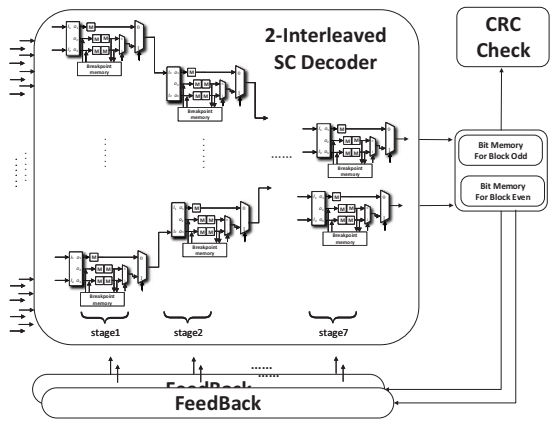}} \par}
\caption{LLI PCM-SC-2 decoder.}
\label{highSpeed_SCM}
\end{figure}

\begin{figure}
{\par\centering
\resizebox*{3.0in}{!}{\includegraphics{./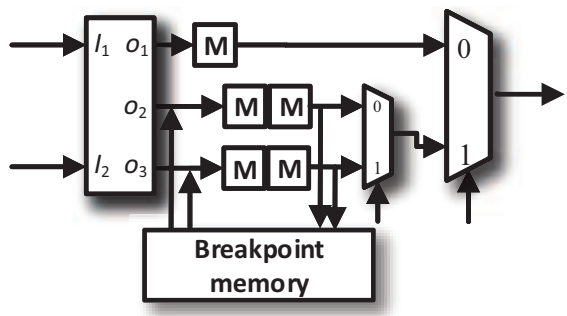}} \par}
\caption{Processing element of LLI architecture.}
\label{process_element}
\end{figure}

\begin{table*}[htbp]
  \centering
  \footnotesize
  \tabcolsep 4mm
  \renewcommand{\arraystretch}{1.5}
  \caption{SYNTHESIS RESULTS COMPARISON OF DIFFERENT POLAR DECODERS FOR $N=256$.}
\begin{tabular}{l|cc|c|cc}
\Xhline{1.0pt}
\multirow{2}{2.2cm}{\textbf{Decoder}}
&\multicolumn{2}{c|}{\textbf{PCM-SC-2 Decoder}}
&\multirow{2}{2.2cm}{\textbf{Combinational\\SC Decoder} \cite{Dizdar01}}
&\multicolumn{2}{c}{\textbf{Adaptive SCL Decoder} \cite{Sural01}}\\
\cline{2-3} \cline{5-6}
~ &\textbf{IS} &\textbf{LLI} &~ &\textbf{$L = 2$} &\textbf{$L = 4$} \\ \hline
\rowcolor{tabgray}  \textbf{LUTs}    & $4302$    & $4781$    & $35152$  &$12589$    & $16565$   \\ \hline
                     \textbf{FFs}     & $4629$    & $15381$    & $1561$  &$7809$    & $10217$   \\ \hline
\rowcolor{tabgray}  \textbf{Total}   & $8931$    & $20126$    & $36944$  &$20398$   &$26782$   \\ \hline
                     \textbf{RAM [bit]} &0           &0           &$1792$    &0     &0      \\ \hline
\rowcolor{tabgray}  \textbf{Block RAMs} & 0         &0          &0           &14   &28     \\ \hline
\textbf{Max. Freq. [MHz]}           &105.3           &104.7               &---        &135.5  &121.5 \\ \hline
\rowcolor{tabgray} \textbf{Min.-Max.Latency [$\mu$s]}    &$2.44-2.95$      &$1.23-1.82$       &---      &$1.12-11.32$     &$1.25-12.67$   \\ \hline
\textbf{Min.-Max.T/P [Mbps]}  & $83-100$      & $166-201$    &---  &$12-197$  & $11-177$  \\
\Xhline{1.0pt}
\end{tabular}%
\label{table:polar_decoders_comparison}%
\end{table*}%
For a conventional $N$-bit SC decoder, there are $n$ stages---Stage 1 to Stage $n$, with only one stage being active in a clock cycle. As described in \cite{zhangchuan02}, a 2-interleaved SC decoder can decode two polar blocks simultaneously. Inspired by this, the LLI architecture is proposed. The main idea is that when Block Odd is being decoded in Stage $i$ $(i\neq n)$, Block Even can be decoded in Stage $i-1$ since this stage is idle for Block Odd. The decoding process of the two blocks will conflict in the last stage---Stage $n$, because every block needs to stay in Stage $n$ for two clock cycles. Therefore, an additional PE is needed in this stage.
As shown in Fig.~\ref{highSpeed_SCM}, LLI PCM-SC-2 decoder has an extra PE in the Stage $n$. In addition, two independent bit memories and feedback modules are designed for Block Odd and Block Even, in order to decode them simultaneously.
Fig.~\ref{process_element} shows the PE of the LLI PCM-SC-2 decoder. It has two additional registers which are used to store LLRs of the two blocks, compared with that of the IS PCM-SC-2 decoder.

With the proposed design, whenever a mutual information bit in Block Even is decoded, it can be compared with the mutual information bit of the same location in Block Odd (which was decoded one clock cycle before). If the two bits are different, all intermediate LLR values of the two blocks, which are stored in the registers, will be immediately sent to the breakpoint memory, and then the decoding process continues. When it comes to Case 2 or Case 3, the incorrectly decoded block starts the second round of decoding from the position of the first different mutual information bit, and the intermediate LLRs are directly read from the breakpoint memory instead of being calculated. In the studied case of PCM with the same parameters as those of Fig.~\ref{fig_ber_r_0p5}, the indices of the first mutual bit and the last mutual bit are 32 and 209, respectively. It means that if the second round of decoding is required, the computation of the LLRs before the $32$-th bit is avoided in the worst case, and the computation of the LLRs before the $209$-th bit is avoided in the best case.

\subsection{Implementation Results}
\begin{figure}
{\par\centering
\resizebox*{3.0in}{!}{\includegraphics{./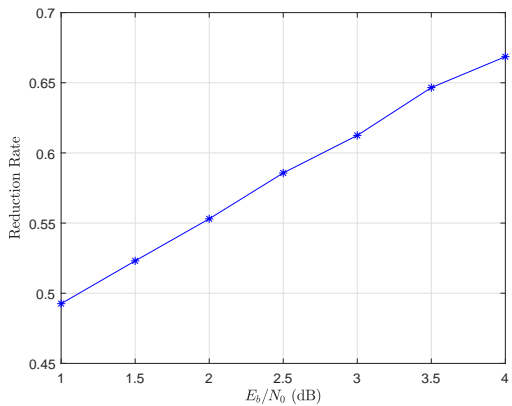}} \par}
\caption{Reduction rate of the average latency for the LLI PCM-SC-2 decoder in the second round of decoding.}
\label{reduction_latency}
\end{figure}

\begin{figure}

{\par\centering
\resizebox*{3.0in}{!}{\includegraphics{./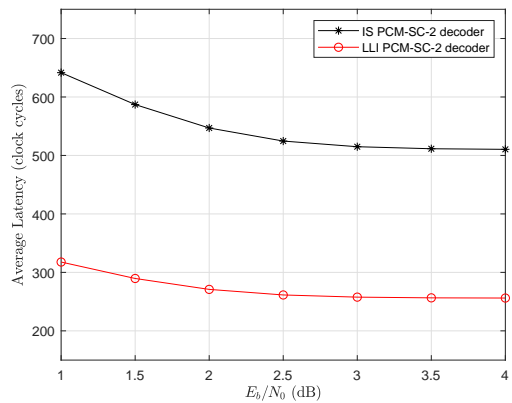}} \par}
\caption{Average latency of the IS PCM-SC-2 decoder and the LLI PCM-SC-2 decoder (in clock cycles).}
\label{average_latency}
\end{figure}
The two decoders are implemented on the Xilinx ZNYQ-7000 field-programmable gate array (FPGA) platform.
The latency of the LLI decoder is lower than that of the IS decoder, and it is reduced nearly by half in the first round of decoding, due to the interleaved decoding for Block Odd and Block Even. It is also reduced in the second round of decoding, because the LLI decoder can begin with the first erroneous mutual information bit.
Fig.~\ref{reduction_latency} shows the reduction rate of the average latency for the LLI PCM-SC-2 decoder in the second round of decoding, and the number of the samples is 100000 at each $E_b/N_0$. It is remarkable that the LLI PCM-SC-2 decoder can reduce the latency of the second round decoding by $49.3\%$ and $66.9\%$ at $E_b/N_0=1$ dB and $E_b/N_0=4$ dB, respectively. Fig.~\ref{average_latency} shows the average latency of the two decoders.
It can be seen that the average latency of the LLI PCM-SC-2 decoder is approximately half of that of the IS PCM-SC-2 decoder because the latency reduction rates are both around $50\%$ in the first round and the second round.

Table~\ref{table:polar_decoders_comparison} shows the synthesis results comparison of different polar decoders for $N=256$, including the IS and LLI PCM-SC-2 decoders, the combinational SC decoder in \cite{Dizdar01}, and the adaptive SCL decoder \cite{Sural01} with $L=2$ and $L=4$.
As shown in Table~\ref{table:polar_decoders_comparison}, the IS PCM-SC-2 decoder consumes the least hardware resources, although the maximum throughput is inferior to the others. The total consumptions (FF and LUT) of the IS PCM-SC-2 decoder and the LLI PCM-SC-2 decoder are only $24.2\%$ and $54.5\%$ of the consumption of SC decoder in \cite{Dizdar01}, respectively. The hardware consumption of the LLI PCM-SC-2 decoder is $98.7\%$ and $75.2\%$ of the consumption of adaptive SCL decoder with $L=2$ and $L=4$, respectively. The reason is that the SCL decoder needs $L$ SC decoder modules while the proposed architecture only needs one. Moreover, the decoders in \cite{Sural01} and \cite{Dizdar01} use additional RAM and Block RAMs, which increases the consumption of hardware resources, while the proposed PCM-SC-2 decoders do not.

Table~\ref{table:polar_decoders_comparison} also shows the range of the latency and the throughput of the PCM-SC-2 decoders and the adaptive SCL decoder. It is observed that the minimum latency and the maximum throughput of the LLI PCM-SC-2 decoder are comparable to those of the adaptive SCL decoder with $L=2$, and are slightly superior to those of the adaptive SCL decoder with $L=4$.
As for the worst situation, the maximum latency of the LLI PCM-SC-2 decoder is only $16.1\%$ and $14.4\%$ of the adaptive SCL decoder with $L=2$ and $L=4$, and the minimum throughput is improved by more than 13 and 15 times compared to them, respectively.

\section{Conclusion}\label{sec_con}
In this paper, PCM employing the SC, BP, or SCL decoding is proposed. By sharing a certain amount of mutual
information bits between a pair of blocks,  this scheme can bring down the PER to the square of the
underlying polar codes. Results show that for the block length 256, the proposed PCM-SC-2 and PCM-BP-2 decoders can match the PER of the stand-alone SCL decoder with two lists. The PER performance of PCM-SCL-2 decoder with $L$ lists can match the PER of the stand-alone SCL decoder with $2L$ lists. In the meantime, the proposed LLI hardware architecture for PCM can achieve 13 times more throughput compared to the adaptive SCL decoder with two lists when the block length $N=256$ in the worst case.
\bibliography{ref_polar}
\bibliographystyle{IEEEtran}
\end{document}